\qquad\texttt{$\triangleright$ },
\renewcommand\subsubsection[1]{\textbf{#1.}~}
\def\PUT/{\cn{put}}
\def\FOUND/{\cn{found}}
\def\FULL/{\cn{full}}
\def\EMPTY/{\cn{empty}}
\def\TRUE/{\cn{true}}
\def\FALSE/{\cn{false}}
\def\rk{\id{rk}}
\def\load{\id{load}}
\def\cas{\pr{Cas}}
\def\Any{\pr{Any}}
\def\First{\pr{First}}
\def\Count{\pr{Count}}
\def\Shuffle{\pr{Shuffle}}
\newcommand{\Fop}{\pr{Fop}}
\newcommand\concept[1]{\textit{#1}}
\newcommand{\concat}[2]{\ensuremath{#1 . #2}}
\DeclareMathOperator{\sg}{sg}
\title{Compact Parallel Hash Tables on the GPU}
\author{Steef Hegeman \inst{1}\orcidID{0009-0009-0368-8502} \and
Daan Wöltgens\inst{2,3} \and
Anton Wijs\inst{2}\orcidID{0000-0002-2071-9624} \and
Alfons Laarman\inst{1}\orcidID{0000-0002-2433-4174}}
\authorrunning{S. Hegeman et al.}
\institute{LIACS, Leiden, The Netherlands\\
\email{\{s.hegeman,a.w.laarman\}@liacs.leidenuniv.nl} \and
Eindhoven University of Technology, Eindhoven, The Netherlands\\
\email{a.j.wijs@tue.nl} \and
TNO, The Netherlands\\
\email{daan.woltgens@tno.nl}}
\begin{document}
\maketitle

\begin{abstract}
	On the GPU,
	hash table operation speed
	is determined in large part by cache line efficiency,
	and state-of-the-art hashing schemes thus divide tables into cache line-sized buckets.
	This raises the question whether performance can be further improved
	by increasing the number of entries that fit in such buckets.
	Known compact hashing techniques
	have not yet been adapted to the massively parallel setting,
	nor have they been evaluated on the GPU.
	We consider a compact version of bucketed cuckoo hashing,
	and a version of compact iceberg hashing
	suitable for the GPU.
	We discuss the tables from a theoretical perspective,
	and provide an open source implementation
	of both schemes in CUDA for comparative benchmarking.
	In terms of performance,
	the state-of-the-art cuckoo hashing benefits from compactness on lookups and insertions
	(most experiments show at least 10--20\% increase in throughput),
	and the iceberg table benefits significantly,
	to the point of being comparable to compact cuckoo hashing---%
	while supporting performant dynamic operation.

	\keywords{Cuckoo hashing \and Iceberg hashing \and Quotienting \and CUDA}
\end{abstract}

\section{Introduction}
General purpose graphics processing units (GPUs)
have been used to significantly speed up computations in many different domains.
With thousands of processing cores,
GPUs offer access to massive parallelism for everyone.
However, the main GPU memory tends to be scarcer than the main memory available to CPUs.
Moreover, memory access times are often the main performance bottleneck of GPU programs.
Therefore, data structures that sparingly use GPU memory
can positively affect both the amount of data that can be stored and the performance of a GPU program.

Memory-efficiency can be achieved by \concept{quotienting},
a technique for reducing the storage required per key by using its storage location as information.
This was first used in practice in Cleary's compact hash table~\cite{cleary_compact_1984}.
This reduces the memory usage per table slot logarithmically in the number of slots.
The parallelization of Cleary's hashing scheme in \cite{vegt_parallel_2012} involves coarse locking.
On the GPU, the (coarse) locking strategies of traditional CPU multicore algorithms do not perform well.
The fastest GPU tables use atomic operations to directly insert keys into the table.
We refer to this as \concept{lockless}.
In addition, it is fair to say that optimizing GPU hash table performance
is about reducing the number of cache lines involved per operation \cite{awad_analyzing_2023}.

Among the available GPU hash tables,
adaptations of cuckoo hashing \cite{pagh_cuckoo_2004,alcantra_building_2012,awad_analyzing_2023}
and iceberg hashing \cite{bender_iceberg_2023,awad_analyzing_2023} have proven promising.
In particular, the bucketed cuckoo table of \cite{awad_analyzing_2023}, the slab hash table of \cite{ashkiani_dynamic_2018} and DyCuckoo \cite{li_dycuckoo_2021} are state-of-the-art.
Indeed, \cite{arbitman_backyard_2010,bender_iceberg_2023} confirm that at least in theory compact hashing with cuckoo or iceberg hashing is a good idea.
A downside of these tables is however that they are more strict in terms of where a key can be stored.
With cuckoo hashing, this can lead to situations where there are still empty slots in the table,
but specific keys cannot be added to it.
So here, one should not just consider the memory savings per slot,
but also the expected \concept{fill factor} of the table.

Moreover, not all hash tables offer the same guarantees.
The bucketed cuckoo table of \cite{awad_analyzing_2023} for example is \concept{static},
meaning that it only supports a mode of operation where the table is built once
out of a batch of unique keys,
and can then only be queried.
As inserts temporarily move existing keys from the table to local memory,
lookup operations performed during insertion may return false negatives,
and concurrent insertion operations may cause keys to be stored in multiple slots.
For the same reason,
DyCuckoo~\cite{li_dycuckoo_2021}, while supporting resizing,
does not support concurrent insert operations.
We say that a table supports \concept{dynamic} operation
if it supports concurrent combinations of lookups and writes.
In many applications,
dynamic operation
is essential.
For instance, in model checking~\cite{principlesofmodelchecking,wijs_model_checking_2023},
fixpoint algorithms continuously check whether keys have been seen before and insert them if not.

We propose a compact, lockless GPU hash table that supports concurrent inserts.
Our table is based on iceberg hashing, with a provably correct find-or-put operation.
We provide an open source implementation in CUDA.
We also combine the lockless GPU hash table based on
cuckoo hashing from~\cite{awad_analyzing_2023} with compact hashing for comparison.
Synthetic benchmarks show that
both tables benefit from compactness
(10--20\% speedup on lookups and insertions),
while also halving or nearly quartering memory usage in many situations.
Furthermore, the compact iceberg table performance is very close
to the compact cuckoo one in static situations%
---a significant result,
as the cuckoo table of \cite{awad_analyzing_2023}
is, to the best of our knowledge,
the fastest static GPU hash table to date,
and the compact cuckoo table is comparable to it.
This demonstrates the competitiveness of the compact iceberg table with the state-of-the-art.

In conclusion, we establish that when it comes to hash tables on the GPU,
you can have your cake and eat it too:
compact hashing can lead to both reduced memory usage as well as improved performance.

\section{Background}
\subsection{Hash tables}
A hash table $T$ is a data structure implementing a set or map from keys $K$ to values $V$.
Some hash tables are \concept{stable}, meaning that the index at which a key is stored does not change.
In this paper, we focus on hash tables representing sets,
but the presented operations can be extended to a map implementation,
by storing key-value or key-pointer pairs instead of keys, or, in the case of a stable table,
by storing values in a separate array at the same index as the key.

Hash tables are typically optimized for \pr{Find}, \pr{Put}, and \pr{Delete} operations.
We generally do not consider deletions (or treat them as if exceptional),
as we are interested in the use of hash tables in search algorithms.

In this paper,
our tables generally consist of an array of \emph{buckets}.
Each bucket contains one or more \concept{slots}.
Each slot is either unoccupied (has value $\EMPTY/$) or stores a key,
possibly together with additional bookkeeping information.
$H$ hash functions $h_0,\dots,h_{H-1}: K \rightarrow T$  map keys to buckets in $T$.
Given a hash function $h_i$,
a key $k$ can be stored in some slot with index $j$ of the corresponding bucket $T[h_i(k)]$, i.e., 
$T[h_i(k)][j]$.

A table's \concept{fill factor} is defined as the fraction of slots that are occupied.
If key $k$ is to be inserted into the table,
but buckets indexed by $h_0(k),\dots,h_{H-1}(k)$ are full
(all their slots are occupied),
then there is typically no way to add $k$ to the table and the  table is considered full
(an exception is the cuckoo hashing scheme below).
A bigger table can then be allocated to store the set (\concept{rehashing}) \cite{pagh_cuckoo_2004}.
The larger $H$,
the greater the expected fill factor a table achieves before rehashing is needed,
but the longer lookups may take, as potentially all buckets $h_i(k), i < H$ need to be checked.

In which of the buckets $h_i(k)$ a key $k$ is stored,
and how the table is manipulated,
is determined by the hashing \concept{scheme}.
The memory efficiency of a scheme depends on the expected fill factor that can be achieved,
as well as the memory used per table slot.
We discuss two schemes below.

\subsection{Cuckoo hashing}
\subsubsection{Insertion}
Cuckoo hashing \cite{pagh_cuckoo_2004,panigrahy_efficient_2005,erlingsson_cool_2006}
achieves high fill factors in practice
while limiting the slots in which a key may be placed,
by moving keys between buckets during insertion.
A yet to be inserted key may take the place of a key already in the table
(\emph{evicting} the original key),
forcing it to be moved elsewhere.

Each slot either contains a dedicated value $\EMPTY/$,
signaling it is unoccupied,
or (in the case of cuckoo hashing) stores a pair $(k,j)$ of key and hash function index $j < H$.
When a slot containing $(k,j)$ is evicted,
the key $k$ is in turn directed to bucket $h_{j+1 \bmod H}(k)$,
evicting a slot there,
until finally a key is directed to a bucket with an empty slot,
in which case it takes the empty slot and the insertion is finished.
If the chain of evictions exceeds a threshold length $C$,
the insertion is aborted and the table is considered full.
An optimization is to, instead of storing a hash function index with the key,
recover it from the key $k$ found in bucket $b$
as the first $j$ with $h_j(k) = b$.

\subsubsection{Lookup}
A key $k$ is only found in a bucket $h_{i+1}(k)$ if it was kicked out of bucket $h_{i}(k)$,
so if deletions are forbidden,
the $\pr{Find}(k)$ procedure only has to check indices $h_0(k),h_1(k),\dots$ in order up to
the first bucket containing $k$ or an unoccupied slot.
(For rare deletion operations, additional `tombstone' values can preserve this invariant,
as discussed in, for example, \cite{gunji_studies_1980}.)

\subsection{Iceberg hashing}
Iceberg hashing \cite{bender_iceberg_2023} divides the table into multiple \emph{levels}.
Each key is assigned to one large \emph{primary bucket} in the first level by $h_0(k)$,
and two smaller \emph{secondary buckets} in the second level indicated by $h_1(k),h_2(k)$.
There is an additional third level outside of the table structure, made up of linked lists.
Iceberg hashing is stable:
once a key is inserted into a slot,
it is never moved to another \cite{bender_iceberg_2023}.

\subsubsection{Insertion}
On insertions,
a key $k$ is placed in the primary bucket if it has unoccupied slots,
otherwise it is placed in its secondary bucket with the most unoccupied slots.
This choice aspect has significant impact on load balancing \cite{azar_balanced_1994}.
If both the primary and secondary buckets of $k$ are full,
then it gets sent to the third level,
where $k$ is inserted into the linked list indicated by $h_0(k)$.
As level 3 can grow arbitrarily large,
insertions can never fail.
But to maintain performance,
the table should be rehashed when the third level grows large.

\subsubsection{Lookup}
Again assuming deletions are prohibited, lookups can be performed as follows.
First inspect the primary bucket.
If it contains $k$, we are done,
and if it does not contain $k$ but some of its slots are unoccupied, we are done as well.
Otherwise, both secondary buckets need to be inspected.
If $k$ is found in neither and one of the buckets has an unoccupied slot,
then $k$ is not in the table.
If all three buckets are full and do not contain $k$,
the linked list indicated by $h_0(k)$ needs to be searched for $k$.

\subsubsection{Level sizes}
Not all buckets in the table have the same number of slots:
the primary buckets are generally chosen to have more slots than the secondary ones \cite{pandey_iceberght_2023}.
Additionally, there are typically fewer secondary buckets than primary buckets \cite{pandey_iceberght_2023}.
So it might be instructive to think of an iceberg table as two hash tables and an array of linked lists.

The multi-level approach allows iceberg tables to achieve a high fill factor.%
\footnote{
	In \cite{bender_iceberg_2023}, definition of fill factor (space efficiency) is adapted to include level 3.
}
On the other hand,
the more levels an operation needs to traverse,
the longer it takes.
If an operation only needs to work on the first two levels,
it runs in constant time,
but if level 3 also needs to be inspected, it does not.
With proper tuning of the parameters,
the expected spillage of level 1 to level 2 is small,
and the spillage of level 2 to level 3 even smaller \cite{bender_iceberg_2023,pandey_iceberght_2023}.
In \cite{pandey_iceberght_2023},
they use large primary buckets (64 slots),
and a 1 to 8 ratio in the size of the primary and secondary level.
In practice, iceberg hashing achieves a high fill factor
while maintaining near constant time performance (constant-time with high probability).

\subsection{Compact hashing}
In \cite{cleary_compact_1984},
Cleary
introduced a technique for compact hashing
(now known as \concept{quotienting} \cite{bender_iceberg_2023}),
which reduces the storage space required per slot
logarithmically in the number of buckets as follows.

Let $T$ be a hash table with $2^N$ buckets,
let $K = 2^M$ be the binary strings of length $M$,
and let $\pi: K \to K$ be a permutation of keys.
Assign key $k$ to the bucket $a(k)$,
where $a(k)$ is the number denoted by the first $N$ bits of $\pi(k)$.
This is also called the \emph{address} of $k$.
The remaining bits of $\pi(k)$,
called the \emph{remainder} $r(k)$, are then stored in a slot in this bucket.
The key occupying a slot can thus be recovered by combining
the address of the bucket that the slot is in and the remainder stored in the slot,
followed by computing the inverse under $\pi$.
In summary, $k = \pi^{-1}(\concat{a(k)}{r(k)})$, where $\concat{}{}$ stands for concatenation of strings.

The remainders are only $M-N$ bits in length
so per-slot and thus per-table memory savings are logarithmic in the total number of buckets.

Schemes using multiple hash functions $h_i$ to assign keys to multiple buckets
can be made compact by using multiple permutations $\pi_i$,
giving rise to multiple address functions $a_i$ and remainder functions $r_i$.

\section{A parallel compact iceberg hash table}
We provide a lockless parallel compact iceberg algorithm.
We focus on a parallel find-or-put operation which returns $\FOUND/$
if the given key is in the table, and otherwise inserts the key (or returns $\FULL/$).
With this operation, we can support ubiquitous fixpoint computations as the ones used in model checking~\cite{principlesofmodelchecking,wijs_model_checking_2023} and many other applications.
Both cuckoo and iceberg hashing have proven promising for GPU applications, as discussed in the introduction.
Because of the difficulty of realizing concurrent writes in cuckoo hashing, we opt for a compact iceberg table with a concurrent find-or-put operation.

\subsubsection{Find-or-put}
Algorithm~\ref{alg:iceberg-fop-abstract} gives our lockless parallel find-or-put procedure
for compact iceberg hashing, called \Fop.
We use a table $T_0$ for the primary buckets,
and $T_1$ for the secondary buckets,
each compacted separately,
so that $a_0(k)$ indicates a $k$'s primary bucket in $T_0$,
and $a_1(k),a_2(k)$ its secondary buckets in $T_1$.
The constants $B_0$ and $B_1$ are table parameters indicating
the number of slots per primary bucket and per secondary bucket respectively.

Lines in the algorithm are not considered to be atomic,
except for the com\-pare-and-swap operation $\pr{Cas}(a,b,c)$,
which checks if the value of $a$ is $b$,
and if so, replaces it with $c$,
all in one atomic operation.
It returns \TRUE/ if it did replace the value of $a$, and \FALSE/ otherwise.
(In particular,
if $c \neq \EMPTY/$ and multiple threads simultaneously execute $\cas(a,\EMPTY/,c)$ on the same empty slot $a$,
only one succeeds.)
The reads in lines~\ref{line:snapshot_b0},\ref{line:snapshot_b1},\ref{line:snapshot_b2}
are consequently atomic per-slot.

\begin{algorithm}
	\caption{Iceberg: lockless parallel find-or-put}
	\label{alg:iceberg-fop-abstract}
\begin{pseudo}[kw]*
	\hd{Fop}(k)\\
	while $\cn{true}$ \ct{Work on level 1} \\+
	$b \gets T_0[a_0(k)]$ \label{line:snapshot_b0} \ct{Create a local snapshot} \\
	if $(\exists i < B_0 \colon b[i] = r_0(k))$ return \cn{found} \ct{Found $k$, we are done} \label{line:found_0}\\
	if $(\forall i < B_0: b[i] \neq \EMPTY/)$ break \ct{Level 1 is full, go to level 2} \label{line:to-level-2}\\
	else \tn{$b$ has an empty slot---let $b[i]$ be the first} \label{line:first-1}\\+
		if $\cas(T_0[a_0(k)][i], \EMPTY/, r_0(k))$ return \PUT/
		\ct{Insertion attempt} \label{line:insertion-attempt-1} \\-
	\\-
	while $\cn{true}$ \ct{Work on level 2}\\+
	$b_1 \gets T_1[a_1(k)]$ \label{line:snapshot_b1}\\
	if $(\exists i < B_1: b_1[i] = (r_1(k), 0))$ return \cn{found} \label{line:found_1}\\
	$b_2 \gets T_1[a_2(k)]$ \label{line:snapshot_b2}\\
	if $(\exists i < B_1: b_2[i] = (r_2(k), 1))$ return \cn{found} \label{line:found_2}\\
	*&\ct{$k$ is not in the table, try to insert it into the least full secondary bucket}\\
	$i \gets$ $1$ if \tn{$b_1$ is strictly less full than $b_2$} else $2$ \label{line:least}\\
	if $(\forall u < B_1 : b_i[u] \neq \EMPTY/)$ return \cn{full} \label{line:full}\\
	else \tn{$b_i$ has an empty slot---let $b_i[j]$ be the first} \label{line:first-2}\\+
		if $\cas(T_1[a_i(k)][j], \EMPTY/, (r_i(k), i))$ return \PUT/
		\label{line:insertion-attempt-2}
		\ct{Insertion attempt}
\end{pseudo}
\end{algorithm}

\subsubsection{Correctness}
A proof of correctness is given in appendix~\ref{app:proof}.
The key idea is that when the algorithm attempts to insert into a slot,
all earlier slots are nonempty and have been read
(see e.g. lines~\ref{line:first-1},\ref{line:first-2}).
This allows the table to handle concurrent inserts with duplicate inputs.

\subsubsection{Limitations}
We limit ourselves to an iceberg table with level 1 and 2 tables%
---the difficulty in parallelizing iceberg hashing lies in the choice aspect of level 2.
It should be noted that \cite{awad_analyzing_2023}
does not truly separate the first and second level for its iceberg implementation,
defying the basis of the theoretical analysis in \cite{bender_iceberg_2023}.
We believe that the dynamic slab hash table of \cite{ashkiani_dynamic_2018}
could be adapted as a third level for the scheme below,
forming a full 3-level iceberg table.
We have found that already with the first two levels,
good fill factors can be achieved in practice.
We currently omit resizing and support for deletion operations,
as a vast number of operations can already be supported with the given find-or-put operation.
As discussed in for example \cite{gunji_studies_1980},
the table can be extended to support a delete operation using so-called  `tombstone' values,
which preserve the invariant that non-empty slots occur consecutively.
Using a stop-the-world approach, we believe resizing can also be implemented.
In most applications on the GPU, it suffices however to simply claim all memory for the task at hand.

\section{Implementation}
\subsection{Architectural considerations}
\subsubsection{Architecture}
We summarize the architecture of NVIDIA GPUs as described in the CUDA programming guide \cite{nvidia_cuda_guide}.
A GPU contains several multiprocessors,
each capable of executing multiple \emph{threads} (processes) in parallel.
Threads are divided in groups of 32 called \emph{warps}.
Warps are then assigned to multiprocessors.

As a consequence of this warp-oriented architecture,
if threads in a warp take different branches,
the execution of these branches may be serialized.
The highest performance is achieved if
all threads in a warp execute the same lines of code.
An upside of the tight coupling between threads in a warp
is that they can efficiently communicate,
and that their memory accesses can be \emph{coalesced}:
if threads in a warp access elements in the same cache line in parallel,
the line is retrieved from memory only once,
instead of once per thread.

In summary, to improve performance,
threads in a warp should have as little branch divergence as possible,
and should aim to access memory in the same cache line as often as possible.

\subsubsection{Cooperative work sharing}
For bucketed GPU hash tables,
this is typically achieved by \emph{warp-cooperative work sharing} \cite{ashkiani_dynamic_2018,awad_analyzing_2023}.
Each thread receives an input key,
but warps then cooperate,
working together on one of their threads' keys at a time.
When a bucket is inspected,
each thread in the warp reads one of the bucket slots,
together assessing the whole bucket in one
(or few, depending on the bucket size) coalesced reads.
CUDA allows for warps to be subdivided in smaller \concept{cooperative groups},
which can be used when buckets have fewer than 32 slots.
In summary,
cooperative work sharing allows all threads to do useful work
while decreasing the number of memory operations per thread.

\subsubsection{Group synchronization}
In CUDA, each thread has a global \concept{rank},
an index in the total number of threads.
In a cooperative group, each thread also has a local \concept{group rank},
from 0 to the group size (exclusive).
Cooperative groups have several synchronization primitives,
such as \texttt{shfl}, \texttt{any}, and \texttt{ballot},
which can be used to implement the following abstract procedures.

Let $G$ be a group.
$\Shuffle_{G,s}(v)$ evaluates $v$ in the thread with group rank $s$ and returns the result.
$\Any_G(P)$ is \cn{true} if and only if the predicate $P$
evaluates to \cn{true} in any thread in the group,
$\First_G(P)$ gives the group rank of the first thread in $G$ in which $P$ holds,
or $\bot$ if $P$ is not true in any,
and $\Count_G(P)$ gives the number of threads in $G$ in which $P$ holds.

\subsubsection{Global synchronization}
When one thread writes to memory
(in our case, a bucket slot),
there is no guarantee that this change is reflected in reads by other threads
until explicit synchronization,
unless this memory was written using atomic instructions.
(Even then, volatile loads are required.)
Of interest to us are \texttt{atomicCAS} and \texttt{atomicExch},
atomic $\cas$ and $\pr{Swap}$ operations, respectively.

\subsection{Iceberg find-or-put}
Algorithm~\ref{alg:iceberg-fop-concrete} describes the cooperative find-or-put procedure.
For simplicity, we assume that the
primary bucket size $B_0$ must divide 32 (the size of a warp),
and that the secondary buckets contain half that many rows.%
\footnote{
	The actual implementation also supports smaller secondary buckets.
	Larger primary buckets could be implemented.
}
Algorithm~\ref{alg:work-sharing}
implements the cooperative work sharing
for inputs of a multiple of $B_0$ keys.
The actual implementation supports input batches of any size.

\subsubsection{Implementation notes}
In the actual CUDA implementation,
the reads use volatile loads,
and there are some minor optimizations
(filled slots are not read again,
and it exploits that \texttt{atomicCAS} returns the read value of the target slot
to avoid rereading the slot after a failed insertion attempt).

\begin{algorithm}
	\caption{Iceberg find-or-put for cooperative group $G$ with $|G| = B_0 = 2B_1$}
	\label{alg:iceberg-fop-concrete}
\begin{pseudo}[kw]*
	\hd{CoopFop}(k, G)\\
	$\rk \gets \tn{my rank in $G$}$\\
	loop forever \\+
	$s \gets T_0[a_0(k)][\rk]$ \ct{Snapshot ``my'' bucket slot (coalesced read)}\\
	if $\Any_G(s = r_0(k))$ return \FOUND/\\
	$\id{load} \gets \Count(s \neq \EMPTY/)$ \ct{Compute the number of filled slots}\\
	if $\id{load} = B_0$ break \ct{Level 1 is full, go to level 2}\\
	else \ct{One of the threads tries to insert $k$ into the first empty slot}\\+
		if $\rk = \id{load}$\\+
			$s \gets \cas(T_0[a_0(k)][\rk], \EMPTY/, r_0(k))$\\-
		if $\Shuffle_{G,\id{load}}(s)$ return \PUT/ \ct{If it succeeds, we are done} \\-
	\\-
	\tn{subdivide $G$ into $G_1,G_2$ of even and odd threads respectively}\\
	$j \gets (\rk \bmod B_1) + 1$ \ct{Determine ``my'' subgroup}\\
	$\rk' \gets \tn{my rank in $G_j$}$\\
	loop forever \\+
	$s \gets T_1[a_j(k)][\rk']$ \ct{Subgroup $j$ inspects the $j$th bucket (coalesced reads)}\\
	if $\Any_G(r = (r_j(k), j-1))$ return \FOUND/ \\
	$\id{load} \gets \Count_{G_j}(s \neq \EMPTY/)$ \ct{Each subgroup calculates its bucket load}\\
	$\id{load}_1 \gets \Shuffle_{G,0}(\id{load})$\\
	$\id{load}_2 \gets \Shuffle_{G,1}(\id{load})$\\
	*&\ct{One thread tries to insert $k$ into the least full bucket}\\
	$i \gets$ $1$ if $\id{load}_1 < \id{load}_2$ else $2$\\
	if $\load_i = B_1$ return \FULL/\\
	else \\+
		if $\rk = i - 1$\\+
			$s \gets \cas(T_1[a_i(k)][\load_i], \EMPTY/, (r_i(k), i-1))$\\-
		if $\Shuffle_{G,i-1}(s)$ return \PUT/
\end{pseudo}
\end{algorithm}

\begin{algorithm}
	\caption{Find-or-put batch $\id{keys}$ of size $n = mB_0$}
	\label{alg:work-sharing}
\begin{pseudo}[kw]*
	\hd{Fop}(\id{keys}, n)\\
	$\rk \gets \tn{my global rank}$\\
	$k \gets \id{keys}[i]$ \label{line:work-share-coalesce}\\
	\tn{subdivide threads into groups of size $B_0$, let $G$ be my group}\\
	$b \gets \cn{true}$ \ct{I have work to do}\\
	$r \gets \FULL/$\\
	for \tn{$0 \leq i < B_0$}\\+
		$k' \gets \Shuffle_{G,i}(k)$\\
		$r' \gets \pr{CoopFop}(k', G)$\\
		if $\rk = i$\\+
			$r \gets r'$\\
			$b \gets \cn{false}$ \ct{I am done}\\--
	return $r$
\end{pseudo}
\end{algorithm}

\subsection{Permutations}
We use simple permutation functions:
one-round Feistel functions
based on the hash family used in \cite{awad_analyzing_2023}
for comparison purposes.
Users of the CUDA implementation can easily supply their own permutations.

\subsection{Parallel compact cuckoo implementation}\label{sec:cuckoo-impl}
We give a compact cuckoo implementation, that is close to
the bucketed cuckoo implementation of \cite{awad_analyzing_2023}.
The main difference being the use of permutations
instead of hash functions and the use of remainders,
cf. Algorithm~\ref{alg:compact-cuckoo-put}.

A downside of cuckoo hashing is that it is not stable:
keys move during insertions.
This makes it difficult to define a performant find-or-put procedure:
it could be that one process is checking whether $k$ is in the table
exactly when another process has decided to insert $k'$,
and has in this process temporarily evicted $k$ out of the table.
We see only one way to avoid such situations,
and that is to somehow synchronize all processes,
so that no process works on the insertion-phase of a find-or-put while others are in the lookup-phase.
However, on the GPU, synchronization of all parallel processes is very expensive.

Even if all processes are forced to finish their lookup-phase
before one starts their insertion phase,
there is yet another problem:
multiple processes executing find-or-put for the same key $k$.
After the lookup phase,
they could all conclude that $k$ has to be inserted into the table.
During the insertion phase,
after one process inserts $k$ into its first bucket,
an unrelated process might evict $k$ from this bucket in order to insert a key $k'$,
followed by one of the other processes inserting a new copy of $k$ in the first bucket.
This is another way in which multiple copies of a key could be inserted into the table.
While there could still be solutions to this
(locking, or using a prohibitive amount of metadata),
we do not see a performant parallel find-or-put algorithm for cuckoo hashing.
But as a static table, we can still make it compact.

\subsubsection{Insertion}
See Algorithm~\ref{alg:compact-cuckoo-put} for a parallel insertion algorithm for compact cuckoo.
Note the inversion in line~\ref{line:cuckoo-inv}
to recover the evicted key.
The $\cas$ operation is atomic as in Algorithm~\ref{alg:iceberg-fop-abstract}.
In addition the $\pr{Swap}(a,b)$ operation atomically swaps the values of $a$ and $b$.
Again, the snapshot read in line~\ref{line:cuckoo-snapshot} is atomic per-slot.

The CUDA implementation of Algorithm~\ref{alg:compact-cuckoo-put} uses the same warp-based work-sharing
approach as the iceberg table.

\begin{algorithm}
	\caption{Compact cuckoo: parallel put}
	\label{alg:compact-cuckoo-put}
\begin{pseudo}[kw]*
	\hd{Put}(k)\\
	$j \gets 0$\\
	for $c \in \{1,\dots, C\}$\\+
		$a \gets a_j(k)$\\
		$r \gets r_j(k)$\\
		$b \gets \tn{$T[a]$}$ \ct{Create a local snapshot} \label{line:cuckoo-snapshot}\\
		if $\exists i < B \colon b[i] = \EMPTY/$ \ct{If there is an empty slot}\\+
		if $\cas(T[a][i], \EMPTY/, (r,j))$ return \PUT/ \ct{Try to insert into it} \\-
		else\\+
			\tn{choose $i < B$} \ct{Pseudorandomly, or based on $c,k$} \\
			\tn{$\pr{Swap}$($(r,j)$, $T[a][i]$)} \ct{Atomically evict and claim a slot}\\
			$k \gets \pi_j^{-1}(\concat a r)$ \label{line:cuckoo-inv}
			\ct{We now continue with the evicted key}\\
			$j \gets j+1 \bmod H$ \\--
	return \FULL/
\end{pseudo}
\end{algorithm}

\subsubsection{Lookup}
Look for $r_i(k)$ in buckets $a_i(k)$.
As with non-compact cuckoo hashing,
the insertion guarantees that
if bucket $r_i(k)$ is not full,
then buckets $r_j(k)$ with $j > i$ are completely empty (buckets are filled in order).
Hence the search inspects the buckets for $k$ in order,
and stops early if a bucket $a_i(k)$ is inspected that does not contain $r_i(k)$ and is not full
(then $k$ is not in the table).

\subsection{CUDA code}
The full code,
used in the benchmarks below,
has been accepted to appear as a conference artifact \cite{artifact}.
The latest version of our CUDA library is open source and available at
\url{https://github.com/system-verification-lab/compact-parallel-hash-tables}.
The cuckoo part is based partially on the second author's master's thesis \cite{woltgens_cleary_2022}.
For simplicity,
we support only tables of which the number of slots (per level) is a power of 2,
and we assume input keys are 64 bits wide.
The code is set up so that
these restrictions can be eliminated if so desired.
Users can easily supply their own permutation functions.

The iceberg table uses \texttt{atomicCAS} (for $\cas$),
and the cuckoo table also uses \texttt{atomicExch} (for $\pr{Swap}$).
Both operations support word widths of 32, 64, and 128 bits,
and \texttt{atomicCAS} additionally supports 16-bit words.
Our cuckoo table thus supports slots of 32, 64 and 128 bits,
and the iceberg table additionally supports slots of 16 bits.

Smaller slot sizes could be supported by using atomic instructions that are ``too coarse'',
likely at a performance cost.
It is however important to be mindful of the global memory layout:
the total memory usage of a bucket should divide the cache line size (128 bytes)
lest some buckets end up being spread out over multiple cache lines,
degrading performance.

\section{Experimental evaluation}
\subsection{Synthetic benchmarks}
We want to measure the performance impact of compact hashing.
To this end, we set up tables (in various bucket combinations) with the same total number of slots.
Our original keys are wider than 32 bits,
but we can store them in slots of 32 (cuckoo) or 16, 32 (iceberg) bits.
We also benchmark the same tables with 64 bit slots
that would fit the original keys.
We can see the 64 bit tables as a baseline: they essentially behave as non-compact cuckoo and iceberg tables.%
\footnote{
	Microbenchmarks have shown that the runtime of computing the permutations themselves is negligible,
	so this is a fair assumption.
}
Thus, our benchmarks show the impact of compact versus non-compact hashing in terms of runtime performance.
Based on \cite{awad_analyzing_2023},
we expect the tables with 32 slots per (primary) bucket in particular to benefit significantly,
as the 64 bit versions use two cache lines per (primary) bucket
and the compact tables one, or a half.

Input was taken from a set of uniformly drawn unique keys,
taking duplicates from them as necessary.
As in the benchmarks of \cite{awad_analyzing_2023},
to keep the runtime manageable,
we vary the permutations (associating keys to buckets) between measurements,
instead of the input keys themselves.

We benchmark the largest table that can be stored on our GPU.
Apart from the table,
we also need to store input keys,
and reserve memory for storing a.o. return values ($\FOUND/$, $\PUT/$, et cetera).
With the 24GB memory limit of our RTX 4090,
we end up with cuckoo tables of $2^{27}$ slots,
and iceberg tables of $2^{27}$ primary and  $2^{24}$ secondary slots.

\begin{remark}
Larger tables are possible with the use of
CUDA unified memory \cite[section 19.1.2]{nvidia_cuda_guide},
which allows
for allocating more GPU memory than available,
dynamically swapping memory to and from the host (CPU) RAM.
Using unified memory,
our RTX 4090 can work with tables of $2^{29}$ primary and $2^{26}$ secondary slots.
In this case,
the compact tables are $10\times$ faster than the non-compact ones.
However, this is not a particularly fair comparison
(the non-compact tables take at least twice as much memory,
so they will require more swapping)
we focus on the situation where the whole experiment fits in GPU memory.
\end{remark}

If each primary bucket contains 32 of the $2^{27}$ primary slots,
then there are $2^{22}$ primary buckets,
and so compactness will shave 22 bits off of every key.
If each primary slot is 16 bits wide,
they can store remainders of at most 15 bits in length
(one bit is required to indicate whether the slot is occupied or not),
and so  the primary level can store keys of at most
$22 + 15 = 37$ bits.
So we drew our input keys uniformly from $[0,2^{37})$.

We benchmarked cuckoo tables of 8, 16, and 32 slots per bucket,
with slot sizes of 32 (compact) and 64 (non-compact) bits.
For iceberg,
we benchmarked tables with 8, 16, and 32 slots per primary bucket%
---the secondary buckets having half the slots of the primary ones---%
for 16 bit primary slots with 32 bit secondary slots (compact),%
\footnote{
	We use 32 bit secondary slots with the 16 bit primary slots
	because the secondary slots have less compactness
	(as there are fewer secondary slots),
	and so the keys of width 37 would not fit in 16 bit secondary slots.
	They do fit in the 32 bit slots.
}
32 bit primary slots with 32 bit secondary slots (compact),
and 64 bit primary slots with 64 bit secondary slots (non-compact).

We reiterate that each cuckoo resp. iceberg table has the same number of slots,
we only vary how they are divided into buckets and how much of the compactness
is realized in practice
(how much memory each slot consumes).
For cuckoo, the compact 32 bit versions use half the memory of the non-compact 64 bit versions.
For iceberg, the compact 32 bit versions use half the memory of the non-compact 64 bit versions,
and the 16 bit versions use $\frac 9 {32}$ of the memory.

With other recent GPUs,
the RTX 3090 and L40s,
we have found results similar to the ones presented here.
On the older RTX 2080 Ti, we have not.%
\footnote{On the RTX 2080 Ti, compactness shows only a negligible performance increase.}

\subsection{Results}
\subsubsection{Insertion}
\begin{figure}[btp]
	\includegraphics{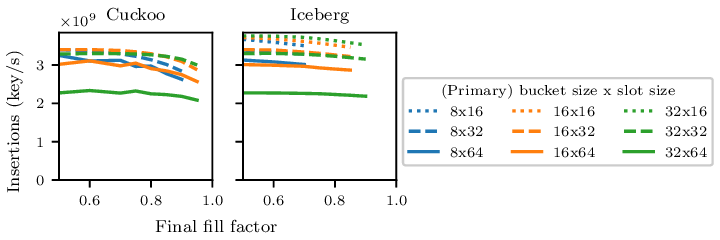}
	\caption{\pr{Put} benchmark.
	The colors indicate bucket setup.
	The solid lines are non-compact tables (baseline),
	the \textbf{dashed lines} use half the memory (compact),
	and the \textbf{dotted lines} use roughly a quarter ($\frac 9 {32}$) the memory (extra compact).
	The compact versions enjoy about 10 times the throughput of the non-compact ones.}
	\label{fig:put}
\end{figure}
Figure~\ref{fig:put}
shows the average throughput of filling the table to a certain fill factor with a batch of unique keys,
measured for fill factors 0.5, 0.6, 0.7, 0.75, 0.8, 0.85, 0.9, and 0.95.
(A static insertion benchmark.)
The cuckoo tables with bucket size 16, 32 achieve fill factor 0.95.
The highest fill factor achieved by the iceberg tables is 0.9,
with bucket size 32.
Both tables show similar performance,
with a modest (5--10\%) advantage for most compact 32 bit versions over 64 bit ones.
The more compact 16 bit slots show an additional slight improvement,
resulting in a 15--20\% advantage over the 64 bit ones.

For the variants with 32 slots per (primary) bucket,
there is a larger speedup.
This is especially relevant for the iceberg table,
as this only reaches fill factor 0.9 on this variant.
The most compact (16 bit primary slots) version of this table is the fastest table in this benchmark,
with a 60\% performance increase over the non-compact version.

\begin{figure}[btp]
	\includegraphics[trim={0 1mm 0 1mm},clip]{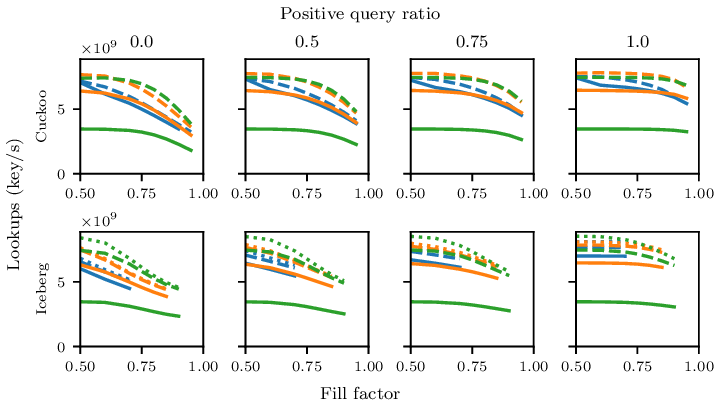}
	\caption{\pr{Find} benchmark. Legend as in Figure~\ref{fig:put}.}
	\label{fig:find}
\end{figure}

\subsubsection{Lookup}
Figure~\ref{fig:find}
shows \pr{Find} benchmark results,
in which for each measurement the table is filled to a certain fill factor,
and then queried for half as many unique keys as there are slots in the table.
The tables with 32 slots per (primary) bucket
benefit significantly from compactness,
roughly doubling throughput over the non-compact version.
For most other bucket sizes,
compactness gives a rough 15--25\% performance increase.

The best cuckoo table (compact, 32 slots per bucket)
outperforms the best compact iceberg table (compact, 32 slots per bucket)
by about 20--30\% at higher load factors,
especially when querying mostly keys that are not in the table.
This can be explained by the fact that with iceberg tables,
both secondary buckets are inspected if the primary bucket is full and does not contain $k$,
while with cuckoo tables it can also happen that only two buckets are inspected (cf. section~\ref{sec:cuckoo-impl}).

\subsubsection{Find-or-put}
We conducted a find-or-put benchmark,
measuring throughput for various combinations of before and after fill factors.
Before each measurement,
the table was filled to the before fill factor.
The $\pr{Fop}$ operation was then issued with as many input keys as there are slots in the table
($2^{27}$ for cuckoo, $2^{27} + 2^{24}$ for iceberg),
containing a mix (with duplicates)
of keys not in the table and
keys already in the table,
such that after the operation,
the table was filled exactly to the after fill factor.

To have a baseline for the iceberg table,
we implemented an unsophisticated
find-or-put operation for the cuckoo table
that first sorts the input keys to detect duplicates
(using the radix sort in Thrust, a library included with the CUDA toolkit),
issues a \pr{Find} once per unique key,
and then inserts one of each key not in the table with \pr{Put}.

\begin{figure}[btp]
	\centering
	\includegraphics[trim={0 1mm 2.3cm 1mm},clip]{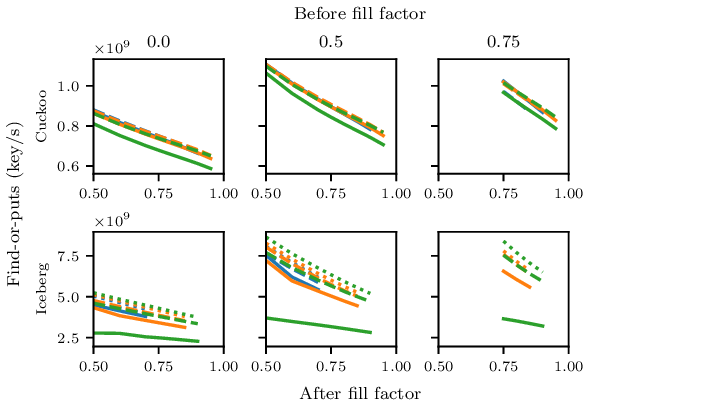}
	\caption{
		\pr{Fop} (find-or-put) benchmark.
		Legend as in Figure~\ref{fig:put}.
		Note the scale difference.
}
	\label{fig:fop}
\end{figure}

Figure~\ref{fig:fop} shows the results.
There is little difference between the cuckoo versions.
The most compact 16 bit iceberg tables show a 10--20\% speedup over the non-compact 64 bit variants,
with a greater 60--100\% speedup for the variant with 32 slots per primary bucket.
The \pr{Find-or-put} of the compact iceberg tables is more than 5 times faster
than the unsophisticated cuckoo find-or-put.

\subsection{Experiments with real-world data}
Apart from synthetic data,
we have also benchmarked the find-or-put operation against real-world data
from a model checking application (HAVi from \cite{wijs_model_checking_2023}).
A single-threaded model checker was used to explore the model,
and the sequence in which the nodes were visited forms our benchmark data.

The set contains about $2^{26.1}$ keys of width 24,
about $2^{23.9}$ without duplicates.
We measured the throughput of handling certain ratios of the data,
for cuckoo tables of $2^{24}$ slots and iceberg tables of $2^{24} + 2^{21}$ slots.

See Figure~\ref{fig:havi} for the results.
The find-or-put of the fastest compact iceberg table is 8 times faster than
the fastest baseline cuckoo find-or-put implementation on the RTX 4090.
The most compact iceberg tables are competitive to their non-compact versions,
with around 5--15\% increase in throughput.

\begin{figure}[btp]
	\centering
	\includegraphics[trim={0 1mm 4.9cm 1mm},clip]{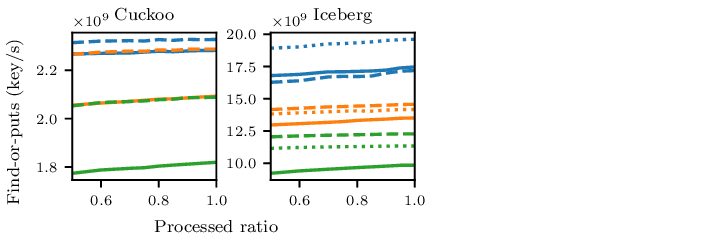}
	\caption{
		\pr{Fop} benchmark for the HAVi dataset.
		Legend as in Figure~\ref{fig:put}.
		The throughput axes do not start at 0,
		so the differences in throughput are smaller than they appear.
	}
	\label{fig:havi}
\end{figure}

\section{Conclusion}
On the GPU,
compact hashing through quotienting
not only saves precious GPU memory,
it also modestly improves performance.
(Compact) iceberg hashing provides a valid alternative to cuckoo hashing,
with comparable $\pr{Find}$ and $\pr{Put}$ performance,
while supporting an efficient, correct find-or-put%
---though our version with only two levels does not achieve as high load factors.
A third level is worth considering,
perhaps made of the slab lists of \cite{pandey_iceberght_2023}.

While we focused on modeling sets,
similar techniques can be used to model key-value dictionaries.
As this increases cache line strain
(when storing the value with the key),
compactness might be of even more importance here.

\bibliographystyle{splncs04}
\bibliography{references}

\newpage
\appendix
\section{Correctness of iceberg find-or-put}\label{app:proof}
We now consider the correctness of the iceberg find-or-put algorithm.
To ease our analysis,
we first introduce a simplified version of Algorithm~\ref{alg:iceberg-fop-abstract}
by making explicit its well-ordering of the slots available to a given key.

Recall that for each key $k$ there are three (non-exclusive) storage locations:
the primary bucket $T_0[a_0(k)]$,
containing $B_0$ slots,
and two secondary buckets $T_1[a_1(k)]$, $T_1[a_2(k)]$,
containing $B_1$ slots each.
Algorithm~\ref{alg:iceberg-fop-abstract} prefers to store $k$ in the first nonempty row in its primary bucket.
If the primary bucket is full, then it opts for the first nonempty slot in the least full secondary bucket instead
(if they are equally full, in the second one).
This corresponds with the following.

\begin{definition}\label{def:order}
	Let $S = \{ 0 \} \times \{ y \mid 0 \leq y < B_0 \} \cup \{ 1, 2 \} \times \{ y \mid 0 \leq y < B_1 \}$.
	Define the well-order $\prec$ on $S$ by
	\begin{itemize}
		\item $(0, x) \prec (1+y,z)$ for all $x,y,z$
		\item $(x, y) \prec (x, z)$ if and only if $y < z$
		\item $(1, x) \prec (2,y)$ if and only if $x < y$
	\end{itemize}
\end{definition}

Given a key $k$,
we see pairs of the form $(0, y)$ as indicating primary slots $T_0[a_0(k)][y]$,
and see the pairs $(1+x, y)$ as slots $T_1[a_x(k)][y]$
in the first ($x = 0$) and second $(x=1)$ secondary buckets.
This essentially orders the slots ``top to bottom, right-to-left'' as seen in Figure~\ref{fig:slot-ordering}.
The insertion behavior of Algorithm~\ref{alg:iceberg-fop-abstract} can be summarized as
continuously attempting to insert $k$ into the first nonempty slot in this ordering until it succeeds.

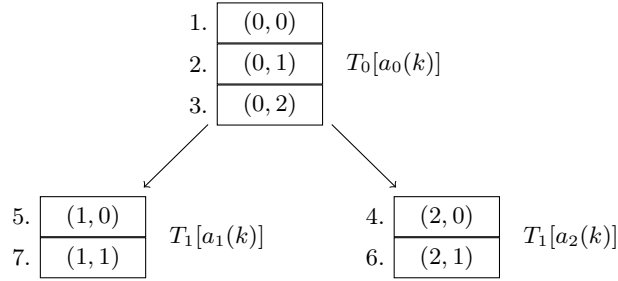
\begin{figure}
	\center
\begin{tikzpicture}
	[ bucket/.style = {matrix of nodes, nodes={draw, minimum width=14mm}}
	, blabel/.style = {}
	]
	\matrix (B0) [bucket] {$(0, 0)$\\$(0, 1)$\\$(0, 2)$\\};
	\matrix (B1) [bucket, below left=1cm of B0] {$(1, 0)$\\$(1, 1)$\\};
	\matrix (B2) [bucket, below right=1cm of B0] {$(2, 0)$\\$(2, 1)$\\};
	\draw[->] (B0) -- (B1);
	\draw[->] (B0) -- (B2);

	\node [right=.8mm of B0] {$T_0[a_0(k)]$};
	\node [right=.8mm of B1] {$T_1[a_1(k)]$};
	\node [right=.8mm of B2] {$T_1[a_2(k)]$};

	\node [left=0mm of B0-1-1] {1.};
	\node [left=0mm of B0-2-1] {2.};
	\node [left=0mm of B0-3-1] {3.};
	\node [left=0mm of B1-1-1] {5.};
	\node [left=0mm of B1-2-1] {7.};
	\node [left=0mm of B2-1-1] {4.};
	\node [left=0mm of B2-2-1] {6.};
\end{tikzpicture}
\caption{%
	Visual representation of order $\prec$ on the primary and secondary slots for a key $k$.
	On top is the primary bucket
	(with $B_0 = 3$ slots),
	below it are the two secondary buckets
	(with $B_1 = 2$ slots each).
	On each slot is drawn the element of $S$ identifying it,
	and to the left of it is its index in the well-ordering.
}
\label{fig:slot-ordering}
\end{figure}

Recall that keys $k$ are stored in their primary bucket by writing $r_0(k)$ into one of its slots;
that in the first secondary bucket,
$k$ is stored by writing $(r_1(k), 0)$ into a slot;
and in the second secondary bucket by writing $(r_2(k), 1)$ into a slot.
We ease notation by using the sign function
$\sg(x)$---which returns $0$ for $0$ and $1$ for all other inputs---%
and functions $r'_i(k)$, where $r'_0(k) = r_0(k)$ and $r'_{1+i}(k) = (r_i(k), i)$.
Given $(x,y) \in S$, the $(x,y)$th slot for $k$ can then be written as
$T_{\sg(x)}[a_x(k)][y]$,
and it contains $k$ if and only if it contains $r'_x(k)$.

In order to state anything about the correctness of the find-or-put algorithm,
we first formalize what table properties it expects and preserves.

\begin{definition}
	A table $T = (T_0, T_1)$ is \emph{well-formed}
	if the following holds:
	\begin{itemize}
		\item Each slot in $T_{\sg(x)}[a_y(k)]$ is either \EMPTY/ or of the form $r'_x(k)$ for some $k$
	\item For each slot of the form $T_{\sg(x)}[a_x(k)][y] = r'_x(k)$, the following holds:
		\begin{align*}
			\forall (x', y') \prec (x, y): T_{\sg(x')}[a_{x'}(k)][y'] \notin \{ \EMPTY/, r'_{x'}(k) \}
		\end{align*}
		That is, if $k$ is in a slot, then all earlier slots for $k$
		are nonempty and do not contain $k$.
		We call this the \emph{order property}.
	\end{itemize}
\end{definition}

The first part states that each table slot must either be empty or contain an appropriate remainder.
The second part (the order property) will play an important role in the correctness argument because of the following observation.

\begin{lemma}
	Well-formed tables do not contain duplicate entries.
\end{lemma}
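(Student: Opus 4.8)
The plan is to argue by contradiction: suppose a well-formed table contains a key $k$ stored in two distinct slots. The key observation is that both storage slots are drawn from the same finite well-ordered set $S$ of locations available to $k$ (via Definition~\ref{def:order}), so we may compare them under $\prec$. Say $k$ occupies slots identified by $(x,y)$ and $(x',y')$ in $S$ with $(x,y) \neq (x',y')$; since $\prec$ is a total order on $S$, without loss of generality $(x',y') \prec (x,y)$. Concretely this means $T_{\sg(x)}[a_x(k)][y] = r'_x(k)$ and $T_{\sg(x')}[a_{x'}(k)][y'] = r'_{x'}(k)$.

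Now apply the order property to the slot $T_{\sg(x)}[a_x(k)][y] = r'_x(k)$: it tells us that for every $(x'', y'') \prec (x, y)$ we have $T_{\sg(x'')}[a_{x''}(k)][y''] \notin \{ \EMPTY/, r'_{x''}(k) \}$. Instantiating $(x'', y'') := (x', y')$, which indeed satisfies $(x',y') \prec (x,y)$, yields $T_{\sg(x')}[a_{x'}(k)][y'] \neq r'_{x'}(k)$. But we assumed that slot equals $r'_{x'}(k)$, a contradiction. Hence $k$ cannot occupy two slots, and the table has no duplicate entries.

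Two small points need to be checked along the way, but neither is a real obstacle. First, one must confirm that a key can only ever be stored in one of the slots indexed by $S$ — i.e., that a slot $T_{\sg(x)}[a_x(k)][y]$ containing $r'_x(k)$ really does mean "$k$ is stored there"; this is exactly the first clause of well-formedness together with the definitions of $a_i$ and $r'_i$, and with the remark (just before the lemma) that the slot contains $k$ iff it contains $r'_x(k)$. One should note that the map $(x,y) \mapsto$ "slot $T_{\sg(x)}[a_x(k)][y]$" may fail to be injective only if $a_1(k) = a_2(k)$, but even then the tagged remainders $r'_1(k) = (r_1(k),0)$ and $r'_2(k) = (r_2(k),1)$ differ in their second component, so distinct elements of $S$ still correspond to distinct (slot, stored-value) possibilities, and the argument goes through unchanged. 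Second, $\prec$ being a genuine well-order (in particular a total order) on the finite set $S$ is immediate from Definition~\ref{def:order}. The crux of the proof is simply recognizing that "$k$ stored in two places" forces an instance of the order property that contradicts itself; there is no delicate estimate or case analysis, so no step stands out as hard.
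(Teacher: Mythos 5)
Your proof is correct and follows essentially the same route as the paper's: assume $k$ occupies two distinct slots, use that $\prec$ totally orders $S$ to pick the $\prec$-larger one, and apply the order property to contradict the assumption that the $\prec$-smaller slot contains $r'_{x'}(k)$. The extra remarks about $a_1(k)=a_2(k)$ and the tagged remainders are sensible sanity checks but not needed for the core argument, which matches the paper's.
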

\begin{proof}
	Let $T$ be a well-formed table,
	and assume towards a contradiction that a key $k$ appears in two distinct slots.
	This means that
	there are $(x,y), (x',y') \in S$
	with $T_{\sg(x)}[a_x(k)][y] = r'_x(k)$
	and $T_{\sg(x')}[a_{x'}(k)][y'] = r'_{x'}(k)$.
	Assume without loss of generality that $(x',y') \prec (x,y)$.
	(We may do this because $\prec$ is a well-order.)
	Applying the order property,
	we find $T_{\sg(x')}[a_{x'}(k)][y'] \neq r'_{x'}(k)$: contradiction.
\end{proof}

Instead of proving the correctness of Algorithm~\ref{alg:iceberg-fop-abstract},
we shall prove it for Algorithm~\ref{alg:fop-short},
which incorporates our new terminology.
It can be obtained from Algorithm~\ref{alg:iceberg-fop-abstract} by
merging the two loops
(trading performance for simplicity)
and realizing that at any point in the algorithm,
the slot it selects to attempt to insert into
is precisely the least nonempty slot (in the local bucket snapshots) according to $\prec$.
From a correctness perspective,
the algorithms are indistinguishable.
We merely choose to analyze Algorithm~\ref{alg:fop-short} for legibility.

\begin{algorithm}
	\caption{Simplified find-or-put. Only $T$ is a shared variable.}
	\label{alg:fop-short}
\begin{pseudo}[kw]*
	\hd{Fop}(k)\\
	while \TRUE/ \\+
		*&\ct{Update local snapshots of buckets}\\
		$b_0 \gets T_0[a_0(k)]$ \label{line:snapshot-0}\\
		$b_1 \gets T_1[a_1(k)]$ \label{line:snapshot-1}\\
		$b_2 \gets T_1[a_2(k)]$ \label{line:snapshot-2}\\
		*\\
		if $(\exists (x,y) \in S : b_x[y] = r'_x(k))$\\+
			return \FOUND/ \label{line:found} \\-
		else if $(\lnot \exists (x,y) \in S : b_x[y] = \EMPTY/)$ \label{line:short-full}\\+
return \FULL/\\-
		else \\+
			$(x,y) \gets \min_\prec \{ (x,y) \in S \mid b_x[y] = \EMPTY/ \}$ \label{line:min}\\
			if $(\cas(T_{\sg(x)}[a_{x}(k)][y], \EMPTY/, r'_{x}(k)))$ \label{line:cas}\\+
				return \PUT/
\end{pseudo}
\end{algorithm}

Algorithm~\ref{alg:fop-short} in words:
make local copies,
\concept{snapshots} $b_0,b_1,b_2$,
of the three buckets for $k$.
Inspect the snapshots:
if any of their slots contain $k$,
return \FOUND/.
Otherwise, try to insert $k$.
If the snapshots do not contain any empty slots,
$k$ cannot be inserted and we return \FULL/.
Otherwise, find the minimal $(x,y) \in S$ under the $\prec$-order
so that slot $b_x[y]$ is empty,
and attempt to insert $k$ into the corresponding slot in the actual table $T$
using an atomic compare-and-swap.
If this succeeds, return \PUT/.
If not, update the snapshots and repeat.

At the end of this section, we shall prove the following theorem,
which also holds for Algorithm~\ref{alg:iceberg-fop-abstract}.

\begin{samepage}
\begin{theorem}\label{thm:correctness}
	Let
	$\Fop(k_0) \parallel \dots \parallel \Fop(k_n)$
	be a parallel composition of Algorithm~\ref{alg:fop-short}
	applied to a well-formed table $T$.
	The following holds:
	\begin{enumerate}\renewcommand{\labelenumi}{\textup{(\alph{enumi})}}
		\item The composition preserves the well-formedness of $T$
		\item Each $\Fop(k_i)$ terminates, and returns either \FOUND/, \FULL/, or \PUT/
		\item $\Fop(k_i)$ returns \FOUND/ or \PUT/ if and only if $k$ is in $T$ after completion
		\item $\Fop(k_i)$ returns \PUT/ if and only if it has inserted $k$ into $T$
		\item If $\Fop(k_i)$ returns \FULL/ then the buckets for $k$ are full and do not contain $k$
	\end{enumerate}
\end{theorem}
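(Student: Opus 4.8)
The plan is to analyze the parallel composition through a sequence of atomic steps, where the only shared-state mutation is a successful $\cas$ on line~\ref{line:cas}. All reads (the snapshots on lines~\ref{line:snapshot-0}--\ref{line:snapshot-2}) and local computations are thread-private, so the execution is equivalent to an interleaving in which each $\Fop(k_i)$ is a sequence of (i) local snapshot-and-decide phases and (ii) atomic $\cas$ attempts. I would first establish a structural invariant maintained after every atomic step: \emph{slots are only ever changed from $\EMPTY/$ to a non-empty value $r'_x(k)$, never the reverse, and never overwritten}. This is immediate from the fact that line~\ref{line:cas} is the sole write and it is a $\cas$ against $\EMPTY/$. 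Consequently each slot is monotone: once non-empty, always non-empty, with fixed contents. I would also note that a successful $\cas(T_{\sg(x)}[a_x(k)][y],\EMPTY/,r'_x(k))$ can only be performed by a thread running $\Fop(k)$ (the target slot and written value are determined by $k$), and only when $(x,y)=\min_\prec\{(x,y)\in S\mid b_x[y]=\EMPTY/\}$ for that thread's snapshot.

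For part~(a), I would show the order property is preserved by a successful $\cas$. Suppose thread $T_k$ running $\Fop(k)$ succeeds at slot $(x,y)$. From its snapshot it saw no $(x,y)\in S$ with $b_x[y]=r'_x(k)$ (else it would have returned $\FOUND/$), and $(x,y)$ was the $\prec$-minimal empty slot in the snapshot, so every $(x',y')\prec(x,y)$ had $b_{x'}[y']\notin\{\EMPTY/,r'_{x'}(k)\}$ in the snapshot. By monotonicity, those slots are still non-empty and unchanged in $T$ at the moment of the successful $\cas$, and since they were not $r'_{x'}(k)$ in the snapshot and haven't changed, they still aren't. Hence the order property holds for the newly written entry. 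For pre-existing entries, monotonicity means their own earlier slots only stay non-empty, and since the new write is to a slot that was $\EMPTY/$, it cannot equal $r'_{x'}(k')$ for a witness to some other key's order property being violated (the order property forbids $\EMPTY/$, and we only replaced an $\EMPTY/$); a small case check confirms no existing entry's order property is broken. Combined with the first well-formedness clause (the only value written to the $(x,y)$th slot for $k$ is $r'_x(k)$, vacuously of the right form), $T$ stays well-formed.

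For~(b), termination: each iteration of the while-loop either returns, or performs a $\cas$ attempt. The $\cas$ fails only if the target slot became non-empty since the snapshot; by monotonicity and finiteness of $S$ (which has $B_0+2B_1$ elements), across iterations of a single $\Fop(k_i)$ the number of non-empty slots among $k_i$'s buckets is non-decreasing, and after a failed $\cas$ it has strictly increased relative to the snapshot that chose that slot — so after at most $|S|$ failed attempts either a $\cas$ succeeds, or all of $k_i$'s slots are non-empty and the next iteration returns $\FOUND/$ or $\FULL/$ on lines~\ref{line:found}/\ref{line:short-full}. Hence each $\Fop(k_i)$ terminates with one of the three values. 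For~(d): $\PUT/$ is returned only immediately after a successful $\cas$, which writes $r'_x(k_i)=k_i$ into a previously empty slot — that is precisely an insertion of $k_i$; conversely, insertions happen only via that $\cas$. For~(c): if $\Fop(k_i)$ returns $\FOUND/$, its snapshot contained $r'_x(k_i)$ in some slot, so by monotonicity $k_i$ is in $T$ at completion; if it returns $\PUT/$, by~(d) $k_i\in T$; if it returns $\FULL/$, then — and here is the crux — I must argue $k_i\notin T$ at completion, which gives~(e) as well. This is the step I expect to be the main obstacle: a thread sees, in a single snapshot, all of $k_i$'s $|S|$ slots non-empty and none containing $k_i$, but these three bucket-reads are not one atomic action, so I must rule out a ``torn'' view where $k_i$ slips in and out between sub-reads. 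The resolution is again monotonicity: no slot ever becomes empty and no non-empty slot ever changes, so if $k_i$ were in $T$ at any time, $r'_x(k_i)$ would sit in some fixed slot from that time onward; in particular it would have been visible in \emph{every} later snapshot, including the one that returned $\FULL/$ — contradiction. Hence $k_i\notin T$ throughout, and since $\FULL/$ is returned only when all of $k_i$'s slots are non-empty (and they stay so), the buckets for $k_i$ are full and do not contain $k_i$, proving~(e); and the ``only if'' direction of~(c) follows because $\FULL/$ is the only remaining return value and it forces $k_i\notin T$.
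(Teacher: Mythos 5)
Your proposal is correct and follows essentially the same route as the paper: a slot-monotonicity invariant (the sole write is a \cas{} against \EMPTY/, so non-empty slots and snapshot entries are stable), preservation of the order property via the $\prec$-minimality of the \cas{} target, termination via the strictly increasing count of non-empty snapshot slots, and monotonicity again to justify the \FOUND//\FULL/ return values. Your treatment of the \FULL/ case (ruling out a ``torn'' three-bucket read) is in fact slightly more explicit than the paper's, which dispatches it with a one-line appeal to the monotonicity lemma.
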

\end{samepage}

When we say that an algorithm preserves an invariant $\phi$,
we mean that, for any instruction in the algorithm,
if $\phi$ holds before the instruction,
then it also holds after the instruction.
As the one operation modifying shared state in Algorithm~\ref{alg:fop-short} is atomic,
$\phi$ is then also preserved by parallel compositions
$\Fop(k_0) \parallel \dots \parallel \Fop(k_n)$
of the algorithm.

From now on, we work in a context where there are only
(and finitely many) parallel executions of Algorithm~\ref{alg:fop-short},
that is,
that there are no other algorithms run alongside it.%
\footnote{
	With extra care,
	one can show that the results still hold if other algorithms are executed in parallel with the find-or-put algorithm,
	so long as they preserve well-formedness and do not modify non-empty table slots.
}
This allows us to use the following invariant.

\begin{lemma}\label{lem:monotonicity}
	Algorithm~\ref{alg:fop-short} does not modify non-empty slots in $T$.
	Consequently,
	the following is an invariant for Algorithm~\ref{alg:fop-short}:
	if a slot in a local snapshot is nonempty,
	then it agrees with the table.
	Symbolically:
	$$\forall (x,y) \in S: b_x[y] \neq \EMPTY/ \implies b_x[y] = T_{\sg(x)}[a_x(k)][y].$$
\end{lemma}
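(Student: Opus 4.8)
The plan is to prove the two assertions in order, as the second is a routine consequence of the first. For the first assertion, observe that the only instruction in Algorithm~\ref{alg:fop-short} that writes to shared state is the compare-and-swap on line~\ref{line:cas}, i.e.\ $\cas(T_{\sg(x)}[a_{x}(k)][y], \EMPTY/, r'_{x}(k))$. By the semantics of $\cas$, this alters the slot only when it currently holds $\EMPTY/$, and in that case it stores $r'_{x}(k)$, which is never $\EMPTY/$ (it is a remainder, possibly tagged with a secondary-bucket index). Hence every write performed by the algorithm turns an empty slot into a non-empty one, and no non-empty slot is ever overwritten. Since, by the standing assumption of this section, the only code acting on $T$ consists of parallel copies of Algorithm~\ref{alg:fop-short}, it follows that once a table slot becomes non-empty it retains that value for the rest of the execution.

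For the invariant, I would argue that it is preserved by every instruction of every parallel thread. The instructions that could affect it are (i) the snapshot assignments on lines~\ref{line:snapshot-0}--\ref{line:snapshot-2}, which modify the local arrays $b_0,b_1,b_2$, and (ii) any write to $T$, i.e.\ the $\cas$ on line~\ref{line:cas}, executed by this or any other thread. For (i): the reads are per-slot atomic, so immediately after $b_0 \gets T_0[a_0(k)]$ (and likewise for $b_1,b_2$) each entry $b_x[y]$ equals the value held by $T_{\sg(x)}[a_x(k)][y]$ at the moment that entry was read; in particular a non-empty $b_x[y]$ agrees with the table then, and by the first assertion that table entry does not change afterwards. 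The slots carried over from the previous loop iteration already satisfy the invariant, and being non-empty they are frozen, so they still agree. Thus the snapshot assignments preserve the invariant. For (ii): fix any slot $(x,y)$ with $b_x[y] \neq \EMPTY/$ in some thread's snapshot. By the invariant holding before the $\cas$, we have $b_x[y] = T_{\sg(x)}[a_x(k)][y]$ beforehand, so that table slot is non-empty, so the $\cas$ (which fires only on $\EMPTY/$) leaves it untouched, and the equality survives. Slots with $b_x[y] = \EMPTY/$ impose no obligation. Hence the invariant is preserved, and since the sole shared-state write is atomic, it is an invariant of the parallel composition; before line~\ref{line:snapshot-0} is ever reached the snapshot arrays are undefined and the invariant holds vacuously, which serves as the base case.

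The bulk of this is bookkeeping, and I do not expect a genuine obstacle. The one point that needs care is that the entire argument rests on non-empty table slots being globally immutable, which uses the hypothesis---isolated just before the lemma, and weakened in the footnote to ``co-running algorithms do not modify non-empty slots''---that nothing other than Algorithm~\ref{alg:fop-short} touches $T$. A second, minor subtlety is that each of lines~\ref{line:snapshot-0}--\ref{line:snapshot-2} reads a whole bucket, and ``atomic read of a bucket'' must be understood as per-slot atomicity, so a snapshot is a slot-wise, consistent-at-read-time image rather than an instantaneous one; this is harmless because the invariant only ever constrains the non-empty entries of a snapshot, and those are exactly the entries that the first assertion guarantees are stable.
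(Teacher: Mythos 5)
Your proof is correct and follows essentially the same route as the paper's: the first claim is immediate because the only write is a $\cas$ against \EMPTY/, and the invariant then reduces to checking the snapshot-update lines. You are somewhat more explicit than the paper in also checking that $\cas$ operations by other threads cannot break the invariant (the paper dismisses this implicitly via the first assertion), but this is a matter of bookkeeping detail, not a different argument.
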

\begin{proof}
	The first part is immediate,
	as the only write-operation in Algorithm~\ref{alg:fop-short}
	is a compare-and-swap against \EMPTY/ slots.
	Hence, for the second part,
	we need only consider the lines in Algorithm~\ref{alg:fop-short}
	which modify one of the $b_i$:
	lines \ref{line:snapshot-0}, \ref{line:snapshot-1}, and \ref{line:snapshot-2}.
	These only update the snapshots with their corresponding buckets in $T$,
	so this trivially preserves the property.
\end{proof}

\begin{lemma}\label{lem:well-formed}
	Algorithm~\ref{alg:fop-short} preserves the well-formedness of $T$.
\end{lemma}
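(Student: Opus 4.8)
The plan is to use that Algorithm~\ref{alg:fop-short} has exactly one instruction that mutates the shared table: the successful compare-and-swap on line~\ref{line:cas}, which replaces an empty slot $T_{\sg(x)}[a_x(k)][y]$ by $r'_x(k)$, where $(x,y) = \min_{\prec}\{(x,y) \in S \mid b_x[y] = \EMPTY/\}$. Every other line, and every failed compare-and-swap, leaves $T$ untouched and hence preserves well-formedness trivially. Since the compare-and-swap is atomic and is the only write, it is enough to show that a single successful compare-and-swap turns a well-formed $T$ into a well-formed $T$. The first clause of well-formedness is immediate, because $r'_x(k)$ is by construction an admissible content for an $(x,\cdot)$-slot of the key $k$. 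All of the work lies in re-establishing the order property.

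I would verify the order property in two parts. First, for the freshly written slot: I must show $T_{\sg(x')}[a_{x'}(k)][y'] \notin \{\EMPTY/, r'_{x'}(k)\}$ for every $(x',y') \prec (x,y)$, in the post-write table. Because $(x,y)$ is $\prec$-least among the empty slots of the snapshot $b$, every such $(x',y')$ has $b_{x'}[y'] \neq \EMPTY/$; by Lemma~\ref{lem:monotonicity} this gives $b_{x'}[y'] = T_{\sg(x')}[a_{x'}(k)][y']$, so that table slot is non-empty --- and it is a different physical slot from the one written, so it stays non-empty after the write. Moreover we reach line~\ref{line:min} only because the test on line~\ref{line:found} failed, so $b_{u}[v] \neq r'_{u}(k)$ for every $(u,v) \in S$; together with the previous equality this shows the earlier table slots do not contain $k$. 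Hence the order property holds at the new slot. Second, for the already occupied slots: the slot just written was empty immediately before the write (that is exactly what the compare-and-swap tested), so it cannot be one of the slots that the order property of some already occupied slot $T_{\sg(p)}[a_{p}(k'')][q] = r'_{p}(k'')$ demands to be non-empty --- otherwise that order property would already fail in the pre-write table, contradicting its well-formedness. So the write touches no such ``predecessor'', and every pre-existing order property survives.

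I expect the main obstacle to be the time lag between reading the snapshot (lines~\ref{line:snapshot-0}--\ref{line:snapshot-2}) and firing the compare-and-swap: the table may change in between, so ``(non-)empty in $b$'' is not automatically ``(non-)empty in $T$ now''. Lemma~\ref{lem:monotonicity} is exactly the right tool --- non-empty slots are never overwritten --- and the argument above is phrased to invoke only that direction. A secondary subtlety is the degenerate case $a_1(k) = a_2(k)$, where the physical slots of $(1,y)$ and $(2,y)$ coincide, so one could worry that a $\prec$-predecessor of the new slot is the very slot being filled. This is harmless: for the algorithm to select $(1,y)$ the slot $(2,y)$ must read non-empty in $b$, and then by Lemma~\ref{lem:monotonicity} the shared physical slot is non-empty when the compare-and-swap fires, so it fails and no write happens. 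That $T_0$ and $T_1$ are stored separately, and that $(2,y) \prec (1,y)$ rather than the reverse, excludes every other physical collision between the written slot and a $\prec$-earlier slot of $k$, so whenever a write actually takes place the predecessors of the new slot are genuinely distinct slots, as used above.
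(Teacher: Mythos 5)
Your proof is correct and follows essentially the same route as the paper's: reduce everything to the successful $\cas$ on line~\ref{line:cas}, use the $\prec$-minimality from line~\ref{line:min} together with the failed found-test to get $b_{x'}[y'] \notin \{\EMPTY/, r'_{x'}(k)\}$ for all predecessors, and transfer this from the snapshot to $T$ via Lemma~\ref{lem:monotonicity}. You are in fact more careful than the paper on two points it leaves implicit---verifying that the write cannot invalidate the order property of already-occupied slots, and the aliasing case $a_1(k)=a_2(k)$---and both are handled correctly.
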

\begin{proof}
	We focus on the order property,
	as the other parts of well-formedness are easily seen to be preserved.
	The only line of concern is line~\ref{line:cas},
	as this is the only line potentially modifying $T$.
	If the $\cas$ is unsuccessful,
	the table is not modified and so the well-formedness is trivially preserved.
	If the $\cas$ is successful,
	only the $(x,y)$th slot for $k$ is modified.
	In this case, it thus suffices to prove that
	$\forall (x', y') \prec (x, y): T_{\sg(x')}[a_{x'}(k)][y'] \notin \{ \EMPTY/, r'_{x'}(k) \}$
	is true afterwards.
	By merit of line~\ref{line:min}
	(and the fact that $(x,y)$ are local variables),
	$(x,y)$ satisfies
	$\forall (x', y') \prec (x, y): b_{x'}[y'] \notin \{ \EMPTY/, r'_{x'}(k) \}$.
	Applying Lemma~\ref{lem:monotonicity},
	we find
	$\forall (x', y') \prec (x, y): T_{\sg(x')}[a_{x'}(k)][y'] \notin \{ \EMPTY/, r'_{x'}(k) \}$
	as desired.
	Thus, regardless of the $\cas$ result,
	line~\ref{line:cas} preserves the order property.
\end{proof}

\begin{lemma}\label{lem:termination}
	(In the context of a finite parallel composition,)
	Algorithm~\ref{alg:fop-short} eventually terminates.
	It returns either \FOUND/, \FULL/, or \PUT/.
\end{lemma}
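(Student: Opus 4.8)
The plan is to show termination first, then identify the possible return values. For termination, I would argue that each iteration of the \texttt{while}-loop that does not return must end with a failed $\cas$ on line~\ref{line:cas}, and that a failed $\cas$ against \EMPTY/ means some other thread has successfully written to that slot. The key quantity to track is the number of empty slots among the (finitely many) slots for $k$, namely the slots indexed by $S$. By Lemma~\ref{lem:monotonicity}, once a slot becomes non-empty it stays non-empty, so this count is non-increasing across the whole parallel composition. Now suppose $\Fop(k)$ runs the loop body without returning: it did not return \FOUND/, so no snapshot slot contained $r'_x(k)$; it did not return \FULL/, so some snapshot slot was \EMPTY/; hence it computed a genuine minimal empty snapshot slot $(x,y)$ on line~\ref{line:min} and attempted the $\cas$. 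If the $\cas$ succeeded it returns \PUT/. If it failed, then at the moment of the $\cas$ the real slot $T_{\sg(x)}[a_x(k)][y]$ was not \EMPTY/, so between the snapshot read and the $\cas$ that slot was filled by some thread; in particular the global empty-slot count for $k$ strictly decreased during this iteration. Since that count starts finite and is non-increasing, only finitely many loop iterations (across \emph{all} threads combined, hence certainly for this one thread) can end in a failed $\cas$. After the last such iteration, the next iteration of $\Fop(k)$ must return on line~\ref{line:found}, \ref{line:short-full}, or \ref{line:cas}.

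More carefully, I would phrase the measure as follows to avoid a subtle gap: define $e$ to be the number of $(x,y)\in S$ with $T_{\sg(x)}[a_x(k)][y] = \EMPTY/$. This is a natural number bounded by $|S| = B_0 + 2B_1$ and, by Lemma~\ref{lem:monotonicity} (no non-empty slot is ever modified), it never increases. Each loop iteration of $\Fop(k)$ that neither returns nor succeeds in its $\cas$ witnesses a strict decrease of $e$ \emph{at some point during that iteration}: the snapshot read saw an empty slot at $(x,y)$, but the $\cas$ at that same $(x,y)$ found it non-empty. Because $e$ can strictly decrease at most $|S|$ times in total over the entire execution, $\Fop(k)$ can execute at most $|S|$ loop iterations that end in a failed $\cas$, after which it is forced to terminate via one of the three \texttt{return} statements. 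This also immediately gives the second sentence of the lemma: the only \texttt{return} statements in Algorithm~\ref{alg:fop-short} are on lines~\ref{line:found}, \ref{line:short-full}, and the one guarded by the $\cas$, returning \FOUND/, \FULL/, and \PUT/ respectively.

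The main obstacle is the coordination between the local snapshot and the shared table across the window between the read on lines~\ref{line:snapshot-0}--\ref{line:snapshot-2} and the $\cas$ on line~\ref{line:cas}: a priori, nothing read into $b_0,b_1,b_2$ need still reflect $T$ by the time of the $\cas$. The resolution is that we do not need the snapshot to be current; we only need the one-directional fact from Lemma~\ref{lem:monotonicity} that empty-to-non-empty is the only transition, so a failed $\cas$ on a slot the snapshot believed empty is unambiguous evidence that the global measure $e$ dropped. I would make sure to state explicitly that $e$ is a property of the shared table $T$, not of any one thread's snapshot, so that ``$e$ strictly decreases'' is meaningful even when the decrease is caused by a different thread; this is where the finiteness of the parallel composition and the ``no other algorithms'' assumption (so that Lemma~\ref{lem:monotonicity} applies globally) are used. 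Everything else is routine: reading off the three return sites, and noting that a successful $\cas$ returns \PUT/ on the spot.
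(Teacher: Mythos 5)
Your proof is correct and follows essentially the same route as the paper's: both arguments hinge on Lemma~\ref{lem:monotonicity} and the observation that every loop iteration that neither returns nor succeeds in its $\cas$ witnesses some slot for $k$ passing irreversibly from \EMPTY/ to non-empty, bounding the number of iterations by $\sizeof{S}$. The only (cosmetic) difference is the choice of measure---you count empty slots in the shared table $T$ (non-increasing, strictly decreasing in each failed iteration), whereas the paper counts non-empty slots in the local snapshots (strictly increasing per iteration)---which are dual formulations of the same potential-function argument.
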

\begin{proof}
	The difficult part is to prove that it must eventually return.
	For this, it suffices that the number
	$|\{ (x, y) \in S : b_x[y] \neq \EMPTY/ \}|$
	of nonempty local snapshot slots grows strictly over the while-loop iterations:
	as there are finitely many slots,
	algorithm then cannot loop infinitely.
	To prove this, note:
	\begin{itemize}
		\item once a snapshot slot is filled, it remains filled by Lemma~\ref{lem:monotonicity};
		\item if during a loop iteration it holds that $k$ is in some snapshot,
			then the algorithm returns \FOUND/;
		\item if during a loop iteration it holds that all snapshot rows are nonempty and do not contain $k$,
			then the algorithm returns \FULL/.
	\end{itemize}
	So at the start of the $m+1$th loop iteration,
	we may conclude that lines \ref{line:min} and \ref{line:cas} were executed in the iteration before.
	It then follows from line~\ref{line:min} and the fact that the snapshots have not been updated since its execution,
	that
	$(x,y) = \min_\prec \{ (x,y) \in S \mid b_x[y] = \EMPTY/ \}$ holds.
	However, after line~\ref{line:cas}---no matter whether the compare-and-swap succeeded---%
	it must have been the case that $T_{\sg(x)}[a_x(k)][y]$ was nonempty.
	By Lemma~\ref{lem:monotonicity},
	this is still the case when the snapshots are updated,
	thus the number of nonempty snapshot slots increases.
\end{proof}

We are now ready to prove Theorem~\ref{thm:correctness}.
\begin{proof}[Theorem~\ref{thm:correctness}]
	Part (a) and (b) follow from Lemma~\ref{lem:well-formed}
	and Lemma~\ref{lem:termination} respectively.
	Part (d) is immediate from inspecting line~\ref{line:cas}.
	One half of part (c) follows from part (d).
	For the other half,
	note that the algorithm only returns \FOUND/ if $k$ is found in a local snapshot
	(line~\ref{line:found}),
	and by Lemma~\ref{lem:monotonicity},
	this means that $k$ must be in $T$ after completion.
	Finally part (e) follows from line~\ref{line:short-full}
	and Lemma~\ref{lem:monotonicity}.
\end{proof}

The above results can be proved for Algorithm~\ref{alg:iceberg-fop-abstract} much the same way as for Algorithm~\ref{alg:fop-short},
with some extra work because of the two separate $\cas$ operations
and to note that
the choice of which slot to $\cas$ into corresponds with the least nonempty slot in the local snapshots
under the $\prec$-order.

\begin{proposition}
	Lemmas \ref{lem:monotonicity},\ref{lem:well-formed}, \ref{lem:termination}, and
	Theorem~\ref{thm:correctness}
	also hold for Algorithm~\ref{alg:iceberg-fop-abstract}.
\end{proposition}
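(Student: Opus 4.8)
The plan is to lift the four supporting results to Algorithm~\ref{alg:iceberg-fop-abstract} essentially verbatim, after establishing the one bridging claim on which everything rests: that the slot Algorithm~\ref{alg:iceberg-fop-abstract} selects for a $\cas$ attempt is always the $\prec$-minimal empty slot among the current local snapshots. Concretely, in the level-1 loop the algorithm reads $b \gets T_0[a_0(k)]$, checks that $b$ has an empty slot (line~\ref{line:first-1}), and takes the \emph{first} such $b[i]$; since $(0,0)\prec(0,1)\prec\cdots$ are $\prec$-below every secondary slot, this $i$ is exactly $\min_\prec\{(x,y)\in S\mid b_x[y]=\EMPTY/\}$ whenever the primary snapshot is not full. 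In the level-2 loop, once line~\ref{line:to-level-2} has been reached the primary bucket was observed full, so by the monotonicity lemma (adapted, see below) the primary snapshot contributes no empty slots on any subsequent iteration; among the secondary slots, line~\ref{line:least} picks the strictly-less-full bucket (the second on a tie) and line~\ref{line:first-2} takes its first empty slot, which by the definition of $\prec$ on $\{1,2\}\times\{y\}$ is again the $\prec$-minimal empty slot. So the two algorithms make identical $\cas$ choices given identical snapshots, which is the content of the remark following Algorithm~\ref{alg:fop-short}.

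Next I would re-prove Lemma~\ref{lem:monotonicity} for Algorithm~\ref{alg:iceberg-fop-abstract}. The first sentence (no non-empty slot is ever overwritten) is again immediate: the only writes are the two $\cas$ operations on lines~\ref{line:insertion-attempt-1} and~\ref{line:insertion-attempt-2}, both against $\EMPTY/$. For the invariant "non-empty snapshot slots agree with the table," the new wrinkle is that there are now three separate snapshot variables $b$, $b_1$, $b_2$ assigned on lines~\ref{line:snapshot_b0}, \ref{line:snapshot_b1}, \ref{line:snapshot_b2}, each assigned to the whole of one bucket; as in the original proof, each assignment copies the current table bucket, so the invariant is trivially re-established at each such line and preserved elsewhere because no other line writes a snapshot or overwrites a non-empty table slot.

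Lemma~\ref{lem:well-formed} then goes through unchanged in structure: the only concern is the order property, the only lines modifying $T$ are the two $\cas$ lines, an unsuccessful $\cas$ changes nothing, and a successful one writes $r'_x(k)$ into the $\prec$-minimal empty snapshot slot $(x,y)$ — by the bridging claim of the first paragraph — so line~\ref{line:first-1}/\ref{line:first-2} together with Lemma~\ref{lem:monotonicity} give $\forall (x',y')\prec(x,y): T_{\sg(x')}[a_{x'}(k)][y']\notin\{\EMPTY/,r'_{x'}(k)\}$ exactly as before. (A small point to mention: when the successful $\cas$ is the level-2 one, one must note that line~\ref{line:to-level-2} guarantees the primary snapshot — hence, by Lemma~\ref{lem:monotonicity}, the primary \emph{table} bucket, for the slots observed full — had no empty slots at the time of the level-1 read, and the slots it did contain are non-empty remainders; combined with the secondary-bucket read this yields the full $\prec$-downward condition.) For Lemma~\ref{lem:termination}, the argument that the count of non-empty snapshot slots grows each iteration needs only the observation that \emph{both} inner \texttt{while} loops, between consecutive iterations, re-read their buckets and would have returned \FOUND/ or \FULL/ had no new slot been observed full; after a failed $\cas$ on either line the target slot is non-empty (the $\cas$ failed because someone else filled it, or it was already non-empty), so by Lemma~\ref{lem:monotonicity} the next snapshot read strictly increases the count — and since there are finitely many slots and finitely many parallel calls, every call terminates, returning \FOUND/, \FULL/, or \PUT/. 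Finally Theorem~\ref{thm:correctness}(a)--(e) follows from these lemmas by the same five one-line arguments as in the proof of Theorem~\ref{thm:correctness}, reading line~\ref{line:insertion-attempt-1}/\ref{line:insertion-attempt-2} for line~\ref{line:cas}, lines~\ref{line:found_0}/\ref{line:found_1}/\ref{line:found_2} for line~\ref{line:found}, and lines~\ref{line:to-level-2} together with~\ref{line:full} for line~\ref{line:short-full}.

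The main obstacle I expect is bookkeeping rather than conceptual: Algorithm~\ref{alg:iceberg-fop-abstract} has two loops with two distinct $\cas$ sites and three snapshot variables, so each invariant step has more cases to check, and one must be careful that the \FULL/ return on line~\ref{line:full} genuinely witnesses "all three buckets full and not containing $k$" — this uses that line~\ref{line:to-level-2} was passed (primary full), lines~\ref{line:found_1}/\ref{line:found_2} were passed ($k$ in neither secondary snapshot), line~\ref{line:full}'s guard ($b_i$ full), and the choice on line~\ref{line:least} forcing $b_{3-i}$ at least as full as $b_i$, hence also full; Lemma~\ref{lem:monotonicity} then transfers "full" from the snapshots to the table. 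Everything else is a transcription of the Algorithm~\ref{alg:fop-short} proof.
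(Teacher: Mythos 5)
Your proposal is correct and follows exactly the route the paper itself indicates: the paper proves the proposition only by remarking that the lemmas transfer ``with some extra work because of the two separate $\cas$ operations'' and because the chosen $\cas$ slot is the $\prec$-least empty slot in the local snapshots, which is precisely your bridging claim. You in fact supply more detail than the paper does (the tie-breaking argument for line~\ref{line:full} and the per-loop termination bookkeeping), and all of it checks out.
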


In summary,
any finite parallel composition of find-or-put procedures behaves as desired,
preserving well-formedness,
reporting found keys,
and inserting missing keys where possible.
As a corollary of the well-formedness of $T$,
each key is inserted at most once.
In particular, if multiple processes find-or-put $k$,
at most one returns \PUT/ (and if this happens, the others return \FOUND/).

\end{document}